\newtheorem{theorem}{Theorem}
\newtheorem{lemma}{Lemma}
\newtheorem*{assumption*}{\assumptionnumber}
\providecommand{\assumptionnumber}{}
\newenvironment{assumption}[1]
{%
	\renewcommand{\assumptionnumber}{A#1}%
	\begin{assumption*}%
		\protected@edef\@currentlabel{A#1}%
	}
	{%
	\end{assumption*}
}
\def \newtext{}
\newcommand{\tr }{^\mathrm{T}}
\newcommand{\abs}[1]{\left\lvert#1\right\rvert}
\newcommand{\norm}[1]{\big\lVert#1\big\rVert}
\newcommand{\indnorm}[1]{{\left\vert\kern-0.25ex\left\vert\kern-0.25ex\left\vert #1 
		\right\vert\kern-0.25ex\right\vert\kern-0.25ex\right\vert}}
\newcommand{\defeq}{\doteq}
\newcommand{\confreg}[1][2]{\mathcal{C}_{#1}}
\newcommand{\outappr}[1][2]{\mathcal{O}_{#1}}
\newcommand{\BR}{\mathbb{R}}
\newcommand{\BE}{\mathbb{E}}
\newcommand{\BP}{\mathbb{P}}
\newcommand\independent{\protect\mathpalette{\protect\independenT}{\perp}}
\def\independenT#1#2{\mathrel{\rlap{$#1#2$}\mkern2mu{#1#2}}}
\def\hyph{-\penalty0\hskip0pt\relax}
\title{ 
Finite-Sample 
Identification {\newtext of
Linear Regression Models}
with Residual-Permuted Sums
}
\author{Szabolcs Szentp{\'e}teri$^{1}$, and Bal{\'a}zs Csan{\'a}d Cs{\'a}ji$^{1,2}$, \IEEEmembership{Member, IEEE} %
\thanks{*This research was supported by the European Union within the framework of the National Laboratory for Autonomous Systems, RRF-2.3.1-21-2022-00002; and by the TKP2021-NKTA-01 grant of the National Research, Development and Innovation Office (NRDIO), Hungary.}%
\thanks{$^{1}$Sz.~Szentp{\'e}teri and B.~Cs.~Cs\'aji are with Institute for Computer Science and Control (SZTAKI), Hungarian Research Network (HUN-REN), Budapest, Hungary,\, 
{\tt\small szabolcs.szentpeteri@sztaki.hu},\, {\tt\small balazs.csaji@sztaki.hu}}%
\thanks{$^{2}$B.~Cs.~Cs\'aji is also with Department of Probability Theory and Statistics, E\"otv\"os Lor\'and University (ELTE), Budapest, Hungary}%
}
\begin{document}

\maketitle
\thispagestyle{empty}
\pagestyle{empty}

\begin{abstract}

This letter studies a {distribution-free}, {finite-sample} data perturbation (DP) method, the {Residual-Permuted Sums} (RPS), which is an alternative of the Sign-Perturbed Sums (SPS) algorithm, to construct {confidence regions}. While SPS assumes independent (but potentially time-varying) noise terms which are symmetric about zero, RPS gets rid of the symmetricity assumption, but assumes i.i.d.\ noises. The main idea is that RPS permutes the residuals instead of perturbing their signs. 
This letter introduces RPS in a flexible way, which allows various design-choices. 
RPS has {exact} finite sample {coverage probabilities} and we provide the first proof that these permutation-based confidence 
regions 
are uniformly strongly consistent under general assumptions. This 
means that the RPS regions almost surely shrink around the true 
parameters as the sample size increases. The {ellipsoidal outer-approximation} (EOA) of SPS is also extended to RPS, and the effectiveness of %
RPS 
is validated by numerical experiments, as well.

\end{abstract}

\begin{IEEEkeywords}
Identification, Linear systems, Randomized algorithms
\end{IEEEkeywords}

\section{INTRODUCTION}

\IEEEPARstart{E}{stimating}
dynamical systems based on empirical data is a fundamental problem in system identification, 
machine learning and statistics. Classical results in the aforementioned areas, such as prediction error methods, typically provide point estimates with asymptotically guaranteed confidence regions \cite{Ljung1999}. However, in practical problems, where the robustness of the solution is a crucial aspect, confidence sets with finite sample guarantees are highly desirable. Due to these reasons, in the recent years, significant emphasis was given to the non-asymptotic theory of system identification \cite{Ziemann2023}. 

{\newtext 
A possible approach to build confidence regions with finite sample guarantees for i.i.d.\ samples is to utilize randomized 
hypothesis tests, e.g., Monte Carlo or bootstrap methods \cite{lehmann2022testing}. 
For linear regression problems, a random permutation based test was presented in \cite{Nyblom2015}, however, it builds on asymptotic approximations, hence it lacks finite sample guarantees.}

{\newtext An important identification method that uses the ideas behind randomized tests and can build exact confidence regions for finite sample sizes with distribution-free guarantees is Sign-Perturbed Sums (SPS) \cite{Csaji2015}. Later SPS was generalized, and Data Perturbation (DP) methods were introduced in \cite{kolumban2015perturbed}.}

The core idea of SPS is to perturb the signs of the residuals in the normal equations, assuming that the measurement noises are independent and symmetric about zero. Then, based on the rank statistics computed from these perturbed quantities, SPS 
builds a confidence region around the least-squares (LS) estimate \cite{Csaji2015}.
DP methods generalize this idea in a way that they allows different types of perturbations depending on the characteristics of the observation noises. As a prime example, a permutation-based DP method was introduced in \cite{kolumban2015perturbed}, where the residuals are permuted instead of sign-perturbed. 
This requires exchangeable 
noises, but symmetricity is not needed.

Another approach to relax the symmetricity assumption of SPS is LAD-SPS which builds confidence regions around the least-absolute-deviation (LAD) estimate 
under the assumption that the conditional medians of the noises are zero \cite{Care2016}. 
In all of the above mentioned algorithms, the coverage probability of the true parameters can be exactly guaranteed for any finite sample size, and the confidence set is given by an indicator function that can be queried at any parameter vector. 

{\newtext For 
linear regression problems, the (uniform) strong consistency of SPS was proven in \cite{Weyer2017}, which means that the SPS regions almost surely shrink around the true system parameters as the sample size increases. 
In addition, a compact representation of SPS confidence sets, given by an ellipsoidal outer-approximation (EOA) algorithm, was proposed in \cite{Csaji2015}.}

As so far consistency was only proven for SPS, it remained an open question whether other types of DP methods, such as permutation-based constructions, are  consistent, as well.
In this letter, we propose the Residual-Permuted Sums (RPS) method for linear regression problems, which is a generalization of the original permutation-based approach of  \cite{kolumban2015perturbed}. 
We rigorously prove the (uniform) strong consistency of RPS 
under general statistical assumptions, and also
extend the EOA of SPS to RPS.
Finally, simulation experiments are presented that demonstrate the effectiveness of RPS  by comparing it with SPS and asymptotic confidence ellipsoids.
\section{Problem Setting}\label{sec:problem_setting}
This section specifies the addressed linear regression problem and introduces our main assumptions on the system.
\subsection{Data Generation}
Consider the following linear regression problem
\begin{equation}\label{equ:system}
    Y_t \,\defeq\, \varphi_t\tr \theta^* + W_t,
\end{equation}
{\newtext for $t \in [n]\defeq\{1, \dots, n\}$,} where $\varphi_t$ is a $d$-dimensional random regressor, $Y_t$ is the scalar output, $W_t$ is the (random) scalar noise and $\theta^*$ is the $d$-dimensional (constant) true parameter.
We are given a sample of size $n$ which consists of regressor vectors (inputs) $\varphi_1, \dots, \varphi_n$ and outputs $Y_1, \dots, Y_n$.

\subsection{Assumptions}
Our assumptions on the noises and the regressors are
\begin{assumption}{1}\label{assu:noise}
The noise terms $\{W_t\}$ are independent and identically distributed (i.i.d.) with finite fourth moments: $\mathbb{E}\big[W_0^4\big] < \infty$.
\end{assumption}
Note that these assumptions on the noises are very mild, as most strong consistency results 
assume independence and require moment conditions from the noise terms.
Also note that unlike SPS \cite{Csaji2015}, the noises do not have to be symmetric about zero, nor do they need zero mean; however their i.i.d.\ nature is essential to ensure exact coverage probabilities.

\begin{assumption}{2}\label{assu:regressor}
{\newtext Regressors $\{\varphi_t\}$ have uniformly bounded fourth moments, $\forall\, t :
\mathbb{E}\big[\|\varphi_{t}\|^4\big] \leq \kappa < \infty$, and they
are $\ell$\hyph independent: $\forall\, t : \varphi_t \independent \{\varphi_k\}_{\abs{t-k}\geq \ell}$, where ``$\independent$'' denotes independence.}
\end{assumption}
{\newtext A consequence of $\ell$-independent regressors is that our analysis covers 
FIR and Generalised FIR models\cite{Ljung1999}. Also note that the independence of $\{\varphi_t\}$ and $\{W_t\}$ is not assumed.}

\subsection{Co-regressor construction}
We introduce a co-regressor based construction that is used by the proposed RPS algorithm. The motivation for using co-regressors is
to cover various design-choices.
We denote the $d$-dimensional random co-regressor by $\psi_t$, and assume that
\begin{assumption}{3}\label{assu:co-regressor_noise}
The co-regressor vectors $\{\psi_t\}$ are {\newtext $\ell$-independent} with uniformly bounded fourth moments, i.e.\ for every $t$,  we have {\newtext $\mathbb{E}\big[\|\psi_t\|^4\big] \leq \kappa < \infty$}, furthermore $\{\psi_t\}$ is independent of the noise sequence $\{W_t\}$, and for every $t:\mathbb{E}\big[\psi_tW_t\big] = 0$.
\end{assumption}
\begin{assumption}{4}\label{assu:co-regressor_regressor}
{\newtext We have}
{\newtext \begin{align}
    \forall\, i,j: \abs{\hspace{0.3mm}i-j}\geq \ell: \psi_i \independent \varphi_j
    \quad \text{and}\quad \mathbb{E}\big[\psi_i\varphi_j\tr\big] = 0,
\end{align}
\begin{align}
    \forall\, k,l: \forall\, i,j : \abs{\hspace{0.3mm}i-j}< \ell: \mathbb{E}\big[\psi_{i,k}^4\varphi_{j,l}^4\big] \leq \kappa < \infty,
\end{align}}
furthermore, the condition below holds almost surely\vspace{-0.5mm}
\vspace{-1mm}
\begin{align}\label{equ:def_Vn}
    \lim_{n \to \infty}\frac{1}{n}\sum_{t=1}^n\psi_t\varphi_t\tr \defeq \lim_{n \to \infty}V_n = V
    \succ 0,
\end{align}
{\newtext where ``$\succ 0$'' denotes positive definiteness.}
\end{assumption}

\begin{assumption}{5}\label{assu:R}
{\newtext There are user-chosen (random) matrices $\{R_n\}$, such that $\{R_n\} \independent \{W_t\}$, $\{R_n\}$ are
positive semidefinite, and there is a 
positive definite $R$,
such that almost surely}
\begin{align}\label{equ:def_Rn}
\lim_{n \to \infty}R_n = R \succ 0,
\end{align}
\end{assumption}
\smallskip
Matrices $\{R_n\}$ are only introduced to allow the reshaping of RPS regions, however, $R_n = I_n$ is also a valid choice.

There are several possible design choices for co-regressors that can satisfy \ref{assu:co-regressor_noise} and \ref{assu:co-regressor_regressor}, here we list some of them. In case $\BE\left[\varphi_t\right]$ is known, {\newtext and the regressors are independent of the noises,}
one can simply use $\psi_t \defeq \varphi_t-\BE[\varphi_t]$ as a co-regressor. In case $\BE\left[\varphi_t\right]$ is unknown, one can replace it with an estimate, $\psi_t \defeq \varphi_t-\zeta_t$, where $\BE\left[\zeta_t\right]=\BE\left[\varphi_t\right]$,
for example, an independent copy of $\varphi_t$ can be used as $\zeta_t$. Finally, one can also apply $\psi_t \defeq f(\varphi_t)$, where $f$ is a %
suitable function.

Let us illustrate this latter option with signed-regressors often used in adaptive filtering
\cite{sayed2003fundamentals}. Let $d=1$ and $Y_t \defeq \varphi_t\, \theta^* + |\varphi_t| N_t,$ where $\varphi_t \independent N_t$ and $\varphi_t, N_t \sim \mathcal{N}(0,1)$.
Then, $W_t \defeq |\varphi_t| N_t$ is not independent of $\varphi_t$. However, as $\text{sign}(\varphi_t)$ and $|\varphi_t|$ are independent, we can use $\psi_t \defeq \text{sign}(\varphi_t)$ as a co-regressor, since this ensures the independence of $\psi_t$ and $W_t$.
\section{The Residual-Permuted Sums Algorithm}\label{sec:SPS_overview}
In this section, we introduce the Residual-Permuted Sums algorithm. The method is a generalization of the permutation-based hypothesis test proposed in \cite{kolumban2015perturbed}. It consists of two parts, in the first part the main parameters and the random permutations are computed, while the second part decides whether a given parameter $\theta$ is included in the confidence region. The first part is given by Algorithm  \ref{alg:rps_init} and the second is presented by Algorithm \ref{alg:rps_indicator}. Using this construction, the $p$-level RPS confidence region can be defined as follows
\begin{equation}
        \confreg[p,n]\, \defeq\, \{\hspace{0.3mm}\theta \in \mathbb{R}^d\text{ : RPS-Indicator}(\theta) = 1\hspace{0.3mm}\}.
        \vspace{-2mm}
\end{equation}
\begin{algorithm}[t]
    \caption{Pseudocode: RPS-Initialization\,$(\hspace{0.2mm}p\hspace{0.2mm})$}
    \label{alg:rps_init}
	\begin{algorithmic}[1]
 		\STATE Given a (rational) confidence probability $p \in (0,1)$, set integers $m > q >0$ such that $p = 1 - q/m$.
  		\STATE Choose a positive semidefinite matrix $R_n$ and find its principal square root $R_n^{1/2}$, i.e., the p.s.d. matrix with
        \begin{align*}
            R_n^{1/2}R_n^{1/2\mathrm{T}} = R_n.
        \end{align*}
        \vspace{-5mm}
		{\newtext \STATE For all
        $i \in [n]$,
        generate (independent) uniform random permutations $\sigma_{i,n}$ of 
        $[n]$, that is, each of the $n!$ possible permutations has probability $1/(n!)$ to be selected.
		\STATE Generate a random permutation $\pi$ of 
        $\{0,\dots, m - 1\}$  uniformly, i.e., each permutation has probability $1/(m!)$}
	\end{algorithmic}
\end{algorithm}
\begin{algorithm}[t]
	\caption{Pseudocode: RPS-Indicator\,$(\hspace{0.2mm}\theta\hspace{0.2mm})$}
    \label{alg:rps_indicator}
	\begin{algorithmic}[1]
		\STATE Compute the prediction errors for $\theta:$ for $t \in [\hspace{0.3mm}n\hspace{0.2mm}]$ let
        \vspace{-0.5mm}
        \begin{align*}
            \varepsilon_t(\theta)\, \defeq\, Y_t - \varphi_t\tr \theta.
        \end{align*}
        \vspace{-4.5mm}
		\STATE Evaluate for $i \in [\hspace{0.3mm}m-1\hspace{0.2mm}]$ the following functions:
        \vspace{0mm}
        \begin{align*}
             &S_0(\theta)\, \defeq\, R_n^{-\frac{1}{2}}\frac{1}{n}\sum_{t=1}^{n}\psi_t\varepsilon_t(\theta),\\
             &S_i(\theta)\, \defeq\, R_n^{-\frac{1}{2}}\frac{1}{n}\sum_{t=1}^{n}\psi_t\varepsilon_{\sigma_{i,n}(t)}(\theta).
        \end{align*}
        \vspace{-1mm}
		\STATE Compute the rank 
        of $\|S_0(\theta)\|^2$ among $\{\|S_i(\theta)\|^2\}:$
        \vspace{-1mm}
            \begin{equation*}
                \mathcal{R}(\theta)\, \defeq\, \left[\,1+\sum_{i=1}^{m-1}\mathbb{I}\left(\norm{S_0(\theta)}^2 \succ_{\pi} \norm{S_i(\theta)}^2\right)\,\right]\!,
                \vspace{-0.5mm}
            \end{equation*}
        where ``$\succ_{\pi}$'' is ``$>$'' with random tie-breaking, i.e., 
        $\|S_k(\theta)\|^2 \succ_{\pi} \|S_j(\theta)\|^2$ if and only if $(\|S_k(\theta)\|^2 > \|S_j(\theta)\|^2) \lor
        (\|S_k(\theta)\|^2 = \|S_j(\theta)\|^2 \,\land\, \pi(k) > \pi(j))$.           
 		\STATE Return 1 if\, $\mathcal{R}(\theta) \leq m - q$, otherwise return 0.
	\end{algorithmic}
\end{algorithm}
\vspace{-3mm}
\subsection{Exact coverage of RPS confidence regions}
The exact coverage probability of the permutation-based variant of SPS was proved in \cite{kolumban2015perturbed}, for the case of deterministic regressors.
This result can be extended to cover the exact confidence of RPS regions under our assumptions: 
\begin{theorem}\label{thm:exact_confidence}
    {\newtext Assuming $\{W_t\}$ are i.i.d., and $R_n$, $\{\psi_t\}$ are independent of $\{W_t\}$, the coverage probability of the constructed confidence region $\confreg[p,n] $ is exactly $p$, that is}
    \begin{equation}
            \BP(\theta^* \in \confreg[p,n]) \,=\, 1-\frac{q}{m} \,=\, p.
            \vspace{0.5mm}
    \end{equation}
\end{theorem}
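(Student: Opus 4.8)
The plan is to reduce the statement to a purely combinatorial rank-uniformity fact, in the spirit of the exact-coverage arguments for SPS and DP methods, by using that the prediction errors evaluated at $\theta^*$ are exactly the i.i.d.\ noises.

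First I would substitute $\theta=\theta^*$ into Algorithm~\ref{alg:rps_indicator}. Since $\varepsilon_t(\theta^*)=Y_t-\varphi_t\tr\theta^*=W_t$, the statistics become $S_0(\theta^*)=R_n^{-1/2}\tfrac1n\sum_t\psi_tW_t$ and $S_i(\theta^*)=R_n^{-1/2}\tfrac1n\sum_t\psi_tW_{\sigma_{i,n}(t)}$. I would introduce the shorthand $Z(\rho)\defeq\big\|R_n^{-1/2}\tfrac1n\sum_t\psi_tW_{\rho(t)}\big\|^2$ for a permutation $\rho$ of $[n]$, so that $\norm{S_0(\theta^*)}^2=Z(\mathrm{id})$ and $\norm{S_i(\theta^*)}^2=Z(\sigma_{i,n})$, and observe that $\mathcal{R}(\theta^*)$ is the position of index $0$ among the values $Z(\mathrm{id}),Z(\sigma_{1,n}),\dots,Z(\sigma_{m-1,n})$ under the tie-broken total order $\succ_\pi$. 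Because $\{\theta^*\in\confreg[p,n]\}=\{\mathcal{R}(\theta^*)\le m-q\}$, it suffices to prove that $\mathcal{R}(\theta^*)$ is uniformly distributed on $\{1,\dots,m\}$.

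The crux is a symmetrization step. I would enlarge the probability space with a fresh uniform random permutation $\sigma_{0,n}$ of $[n]$, independent of $\{\varphi_t\}$, $\{\psi_t\}$, $\{W_t\}$, $R_n$, $\{\sigma_{i,n}\}_{i\ge1}$ and $\pi$. Since $\{W_t\}$ are i.i.d.\ (hence exchangeable) and, by the hypotheses of the theorem and the fact that $\sigma_{i,n}$ and $\pi$ are generated as independent algorithmic randomization in Algorithm~\ref{alg:rps_init}, $\{W_t\}$ is independent of $(\{\psi_t\},R_n,\{\sigma_{i,n}\}_{i\ge1},\pi)$, the tuple $(W_1,\dots,W_n)$ has the same joint law as $(W_{\sigma_{0,n}(1)},\dots,W_{\sigma_{0,n}(n)})$ when paired with the remaining variables. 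Pushing this identity through the measurable map that forms the statistics gives
\[
\big(Z(\mathrm{id}),\,Z(\sigma_{1,n}),\dots,Z(\sigma_{m-1,n}),\,\pi\big)\;\overset{d}{=}\;\big(Z(\sigma_{0,n}),\,Z(\sigma_{1,n}),\dots,Z(\sigma_{m-1,n}),\,\pi\big).
\]
On the right-hand side, conditionally on $(\{W_t\},\{\psi_t\},R_n,\pi)$ the permutations $\sigma_{0,n},\dots,\sigma_{m-1,n}$ are i.i.d.\ uniform, so $Z(\sigma_{0,n}),\dots,Z(\sigma_{m-1,n})$ are conditionally i.i.d., hence exchangeable, and $\pi$ is independent of them and uniform on the symmetric group on $\{0,\dots,m-1\}$. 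A short check then shows that relabelling the $Z(\sigma_{i,n})$ by any permutation $\tau$ of $\{0,\dots,m-1\}$, together with the correspondingly relabelled $\pi$, leaves the joint law invariant.

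Finally I would invoke the standard rank lemma: for $m$ reals that are exchangeable jointly with an independent uniform tie-breaking permutation, the relation $\succ_\pi$ is a strict total order on the indices, and by the relabelling invariance the position of any fixed index is uniform on $\{1,\dots,m\}$ (the $m$ positions form a permutation of $\{1,\dots,m\}$ and all have the same law, so each takes a prescribed value with probability $1/m$). Applying this to index $0$ and transporting the conclusion back through the displayed distributional identity yields $\BP(\mathcal{R}(\theta^*)=r)=1/m$ for $r=1,\dots,m$, whence $\BP(\theta^*\in\confreg[p,n])=\BP(\mathcal{R}(\theta^*)\le m-q)=(m-q)/m=1-q/m=p$. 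The only genuine obstacle is the symmetrization step and the bookkeeping of independence: one must not condition on $\{W_t\}$ prematurely, since $S_0$ uses the identity and is therefore \emph{not} exchangeable with the other statistics given the noises; it is precisely the i.i.d.\ property of $\{W_t\}$ together with its independence of $\{\psi_t\}$, $R_n$ and the algorithmic randomization that licenses replacing $\mathrm{id}$ by the dummy permutation $\sigma_{0,n}$.
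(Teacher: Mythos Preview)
Your overall plan—establishing that $\{\|S_i(\theta^*)\|^2\}_{i=0}^{m-1}$ are exchangeable by symmetrizing with an auxiliary uniform permutation—is essentially the paper's approach (the paper conditions on the order statistics $(W_{(1)},\dots,W_{(n)})$, which is the classical dual of your dummy-permutation device), and your rank-uniformity conclusion is the right endgame. However, the displayed distributional identity is false as written. When you push the substitution $(W_t)\mapsto(W_{\sigma_{0,n}(t)})$ through the map that forms the statistics, it does \emph{not} leave $Z(\sigma_{i,n})$ unchanged for $i\ge1$: every $Z(\sigma_{i,n})$ also reads the noise vector, so the substitution turns it into $Z(\sigma_{0,n}\circ\sigma_{i,n})$. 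The correct identity is therefore
\[
\big(Z(\mathrm{id}),Z(\sigma_{1,n}),\dots,Z(\sigma_{m-1,n}),\pi\big)\ \overset{d}{=}\ \big(Z(\sigma_{0,n}),Z(\sigma_{0,n}\!\circ\!\sigma_{1,n}),\dots,Z(\sigma_{0,n}\!\circ\!\sigma_{m-1,n}),\pi\big),
\]
and your version already fails for $n=m=2$ with, say, $d=1$, $R_n=1$, $\psi_1=1$, $\psi_2=0$ and continuous noises: on your right-hand side the event $\{\sigma_{0,n}=\sigma_{1,n}\}$ forces the first two entries to coincide with positive probability, whereas on the left-hand side, conditionally on $\sigma_{1,n}\neq\mathrm{id}$, they almost surely differ.

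The repair is immediate and the rest of your argument survives intact: conditionally on $(\{W_t\},\{\psi_t\},R_n,\pi)$, the family $(\sigma_{0,n},\,\sigma_{0,n}\!\circ\!\sigma_{1,n},\dots,\sigma_{0,n}\!\circ\!\sigma_{m-1,n})$ is still i.i.d.\ uniform on the symmetric group, since $\BP(\sigma_{0,n}=a_0,\ \sigma_{0,n}\!\circ\!\sigma_{i,n}=a_i\ \text{for }i\ge1)=\BP(\sigma_{0,n}=a_0)\prod_{i\ge1}\BP(\sigma_{i,n}=a_0^{-1}a_i)=(n!)^{-m}$. Hence the right-hand side entries are conditionally i.i.d., your rank-uniformity lemma applies, and $\BP(\mathcal{R}(\theta^*)\le m-q)=1-q/m$ follows. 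This is precisely the mechanism behind the paper's ``conditionally i.i.d.\ given the ordered noises'' claim.
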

\begin{proof}
{\newtext The exact coverage of
RPS regions
can be proven very similarly to the proofs in \cite{Csaji2015, kolumban2015perturbed,volpe2015sign}: by showing that $\{ \| S_i(\theta^*)\|^2\}_{i=0}^{m-1}$ are exchangeable.
First, for deterministic $R_n$ and $\{\psi_t\}$, it can be shown that $\{ \| S_i(\theta^*)\|^2\}_{i=0}^{m-1}$ are conditionally i.i.d.\ (thus also exchangeable) w.r.t.\ the ordered noises, i.e., the $\sigma$-algebra generated by $(W_{(1)}, \dots, W_{(n)})$. This can be generalized for random $R_n$ and $\{\psi_t\}$ by using that they are independent of $\{W_t\}$ and the law of total expectation, i.e.\ by conditioning on 
$R_n$ and $\{\psi_t\}$, as well.}
\end{proof}

\section{Strong Consistency of RPS Regions}\label{sec:strong_consistency}
In this section, we present one of the main contributions of this letter, the proof that the confidence regions generated by RPS are strongly consistent. First, we prove a lemma that plays a key part in the proof of the main theorem.
\subsection{Permutation lemma}
The next lemma is a strong law of large numbers (SLLN) for randomly permuted sequences. The main idea behind its proof is to extend Cantelli's SLLN \cite{shiryaev2021probability}.
Note that SLLN type theorems for permuted sequences in the literature mainly focus on a single exchangeable sequence, however, we have a new permutation for every $n$, i.e., 
a double-indexed sequence.
\begin{lemma}\label{lemma:permuted_sum}
    {\newtext Let $\{X_i\}$ and $\{Y_i\}$ be sequences of $\ell$-independent random variables with $\BE[X_i^aY_j^b] \leq \kappa_0 < \infty$ for all $\abs{j-i}<\ell$ and $a, b \in \mathbb{N}_0$ satisfying $0 \leq a, b \leq 4$ and $(a=b$ or $a+b \leq 4)$.
    Furthermore, for $\abs{j-i}\geq\ell$ let $X_i$ and $Y_j$ be independent and $\BE[X_iY_j] =0$. Let $\{\sigma_n\}$ be 
    independent,
    where $\sigma_n$ is a uniform random permutation of $[n]$.
    Then, we have}
    \vspace{-1mm}
\begin{align}\label{equ:lemma1_claim}
    \frac{1}{n}\sum_{i=1}^{n}X_iY_{\sigma_n(i)} \xrightarrow{\text{a.s.}} 0\qquad(\text{as}\;\;n\to \infty).
\end{align}
\end{lemma}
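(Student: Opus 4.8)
\textit{Proof strategy.} The plan is to establish~\eqref{equ:lemma1_claim} by an \emph{extension of Cantelli's fourth-moment strong law}: for each fixed $\varepsilon>0$ we will bound $\BP\big(\abs{\tfrac1n\sum_i X_iY_{\sigma_n(i)}}>\varepsilon\big)$ by a summable sequence, apply the Borel--Cantelli lemma, and then intersect the resulting null events over $\varepsilon=1/k$, $k\in\mathbb{N}$. Two features demand more than the classical argument: the summands $X_iY_{\sigma_n(i)}$ are not independent --- their $\ell$-local dependence is \emph{scrambled} by the permutation --- and a fresh $\sigma_n$ is drawn for every $n$, so~\eqref{equ:lemma1_claim} concerns a triangular array rather than a single sequence. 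We handle both by deriving every fourth-moment estimate \emph{uniformly in $\sigma_n$} (either valid for each realization of $\sigma_n$, or on average over the uniform $\sigma_n$), which is exactly what makes one Borel--Cantelli argument cover the whole array.

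\textit{Step 1 (reduction to centred summands).} Write $\mu_i\defeq\BE[X_i]$, $\nu_i\defeq\BE[Y_i]$, $\tilde X_i\defeq X_i-\mu_i$, $\tilde Y_i\defeq Y_i-\nu_i$; the centred sequences inherit all hypotheses and $\abs{\mu_i},\abs{\nu_i}\le\kappa_0^{1/4}$. Expanding the product splits the sum into four pieces,
\begin{align*}
\frac1n\sum_i X_iY_{\sigma_n(i)}&=\frac1n\sum_i\tilde X_i\tilde Y_{\sigma_n(i)}+\frac1n\sum_i\tilde X_i\nu_{\sigma_n(i)}\\
&\quad+\frac1n\sum_i\mu_i\tilde Y_{\sigma_n(i)}+\frac1n\sum_i\mu_i\nu_{\sigma_n(i)}.
\end{align*}
For $\abs{i-j}\ge\ell$, independence gives $\mu_i\nu_j=\BE[X_iY_j]=0$; hence if $\mu_{i_0}\ne0$ and $\nu_{j_0}\ne0$ then $\mathrm{supp}(\mu)\subseteq\{i:\abs{i-j_0}<\ell\}$ and $\mathrm{supp}(\nu)\subseteq\{j:\abs{j-i_0}<\ell\}$, so either $\mu\equiv0$, or $\nu\equiv0$, or both have at most $2\ell-1$ nonzero entries. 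In every case the last sum is $O(1)$ (finitely many bounded terms, or zero), so its $1/n$-multiple tends to $0$. The two ``semi-deterministic'' sums each have one centred, $\ell$-independent factor with bounded fourth moments and one \emph{bounded deterministic} factor; the fourth-moment argument of Step~3 applies to them \emph{a fortiori} (a deterministic factor adds no edges to the dependence graph and can never be the isolated factor). Thus it suffices to show $\tfrac1n T_n\xrightarrow{\text{a.s.}}0$ for the main term $T_n\defeq\sum_{i=1}^n\tilde X_i\tilde Y_{\sigma_n(i)}$.

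\textit{Steps 2--3 (the main term).} Introduce the collision set $D_n\defeq\{i\in[n]:\abs{\sigma_n(i)-i}<\ell\}$. Since $\sigma_n$ is uniform, $\BP(i\in D_n)\le(2\ell-1)/n$ and $\BP(i,j\in D_n)\le(2\ell-1)^2/(n(n-1))$ for $i\ne j$, so $\BE[\abs{D_n}^2]\le(2\ell-1)+(2\ell-1)^2=O(1)$, uniformly in $n$. Split $T_n=T_n^{\mathrm c}+T_n^{\mathrm r}$ along $D_n$ and its complement. For $T_n^{\mathrm c}$, Cauchy--Schwarz gives $(T_n^{\mathrm c})^2\le\abs{D_n}\sum_{i\in D_n}(\tilde X_i\tilde Y_{\sigma_n(i)})^2$, and for $i\in D_n$ (so $\abs{i-\sigma_n(i)}<\ell$) expanding $(X_i-\mu_i)^2(Y_{\sigma_n(i)}-\nu_{\sigma_n(i)})^2$ and using the moment hypothesis with exponents $a,b\le2$ (all with $a+b\le4$) yields $\BE[(\tilde X_i\tilde Y_{\sigma_n(i)})^2\mid\sigma_n]\le C$; hence $\BE[(T_n^{\mathrm c})^2]\le C\,\BE[\abs{D_n}^2]=O(1)$ and $\BP(\abs{T_n^{\mathrm c}}>\varepsilon n)=O(n^{-2})$ is summable. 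For $T_n^{\mathrm r}$, condition on $\sigma_n$: when $i\notin D_n$ the factors $\tilde X_i$ and $\tilde Y_{\sigma_n(i)}$ are independent, so $\BE[\tilde X_i\tilde Y_{\sigma_n(i)}\mid\sigma_n]=0$ and $\BE[(\tilde X_i\tilde Y_{\sigma_n(i)})^4\mid\sigma_n]=\BE[\tilde X_i^4]\,\BE[\tilde Y_{\sigma_n(i)}^4]\le C$; moreover $\tilde X_i\tilde Y_{\sigma_n(i)}$ and $\tilde X_j\tilde Y_{\sigma_n(j)}$ are conditionally independent unless one of $\abs{i-j}$, $\abs{\sigma_n(i)-\sigma_n(j)}$, $\abs{i-\sigma_n(j)}$, $\abs{\sigma_n(i)-j}$ is $<\ell$, so their dependence graph has maximum degree $\le4(2\ell-1)$ for \emph{every} $\sigma_n$. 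Expanding $\BE[(T_n^{\mathrm r})^4\mid\sigma_n]$ over $i_1,\dots,i_4\in[n]\setminus D_n$, a summand vanishes whenever some $i_a$ is isolated in the dependence graph induced on $\{i_1,\dots,i_4\}$ (it then contributes an independent mean-zero factor); by the degree bound the surviving tuples form at most two dependence-clusters, so there are only $O(n^2)$ of them (with a constant depending on $\ell$ but not on $\sigma_n$), and each summand is at most $\tfrac14\sum_a\BE[(\tilde X_{i_a}\tilde Y_{\sigma_n(i_a)})^4\mid\sigma_n]\le C$ by AM--GM. Hence $\BE[(T_n^{\mathrm r})^4\mid\sigma_n]=O(n^2)$ uniformly, $\BE[(T_n^{\mathrm r})^4]=O(n^2)$, and $\BP(\abs{T_n^{\mathrm r}}>\varepsilon n)=O(n^{-2})$ is summable. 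Borel--Cantelli gives $\tfrac1n T_n^{\mathrm c}\to0$ and $\tfrac1n T_n^{\mathrm r}\to0$ a.s., and adding the pieces of Steps~1--3 proves~\eqref{equ:lemma1_claim}.

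\textit{Main obstacle.} The delicate point is the count in Step~3: showing, \emph{uniformly in the realization of $\sigma_n$}, that only $O(n^2)$ index-$4$-tuples contribute to $\BE[(T_n^{\mathrm r})^4\mid\sigma_n]$. This rests on the fact that composing the second index with a permutation preserves $\ell$-local dependence with a degree bound that does not grow with $n$, and on upgrading the stated pairwise/single-index independences to the block independence needed to conclude that an isolated index genuinely yields an independent (hence mean-zero) factor --- as it does for the FIR-type processes the assumptions target. Threading all constants uniformly in $\sigma_n$, so that a single Borel--Cantelli step handles the entire triangular array, is the conceptual heart of the extension of Cantelli's law.
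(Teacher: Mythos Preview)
Your argument is correct and reaches the same endpoint as the paper---a fourth-moment bound $\BE[(\text{sum})^4]=O(n^2)$ followed by Chebyshev and Borel--Cantelli---but by a genuinely different route. The paper does \emph{not} center the variables and does not separate a collision set. Instead it partitions $[n]$ into $\ell$ arithmetic progressions $I_s=\{s,s+\ell,\dots\}$ so that the $X_i$'s within one $I_s$ are mutually independent, reduces to a single $I_s$ via the triangle inequality and a union bound, expands $(\sum_{i\in I_s}X_iY_{\sigma_n(i)})^4$ into its five multinomial pieces, and then \emph{averages over the uniform $\sigma_n$}: for the cross terms it computes the probability that $\sigma_n$ places one or two indices into prescribed $\ell$-neighbourhoods, which is $O(1/n)$ or $O(1/n^2)$, and this is what cuts the $O(n^3)$ or $O(n^4)$ many summands down to $O(n^2)$. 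Your version instead centers first (so every singleton factor has mean zero), peels off the random collision set $D_n=\{i:|\sigma_n(i)-i|<\ell\}$ with a second-moment bound using $\BE|D_n|^2=O(1)$, and handles the remainder by a dependence-graph count that is valid \emph{for every realization of $\sigma_n$}, since the graph linking $i$ and $j$ whenever any of $|i-j|,|\sigma_n(i)-\sigma_n(j)|,|i-\sigma_n(j)|,|\sigma_n(i)-j|$ is $<\ell$ has degree $\le 4(2\ell-1)$ regardless of $\sigma_n$. Thus you use the randomness of $\sigma_n$ only once (for $D_n$), whereas the paper uses it throughout; conversely, the paper avoids the centering and collision bookkeeping at the price of the rather intricate case analysis on each multinomial term. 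Your flag on the ``main obstacle'' is apt: factoring out an isolated index requires \emph{block} independence across $\{X_i\}\cup\{Y_j\}$ at mutual distance $\ge\ell$, not merely the pairwise $X_i\perp Y_j$ stated in the lemma---but the paper's factorisation of the multinomial expectations leans on exactly the same upgrade, so this is a shared tacit assumption (satisfied in the FIR-type applications), not a defect of your route.
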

\begin{proof}
{\newtext 
For every $s \in \mathbb{N}$, let $J_s \doteq \{s,\, s+\ell,\, s + 2\,\ell, \dots\}$. Then,
for each $s$, $\{X_{j}\}_{j\in J_s}$ is an independent sequence. Let $I_s \doteq I_s^n \doteq J_s \cap [n]$, hence if $s\leq \ell$, $\lfloor n/\ell\rfloor \leq \abs{I_s} \leq \lceil n/\ell\rceil$.

By the (first) Borell-Cantelli lemma \cite{shiryaev2016probability}, \eqref{equ:lemma1_claim} holds if
\vspace{-0.5mm}
\begin{align}
    \label{eq:BClemma}
    &\sum_{n=1}^\infty\mathbb{P}\Biggl\{\bigg\vert\frac{1}{n}\sum_{i=1}^{n}X_iY_{\sigma_n(i)}\bigg\vert \geq \varepsilon\Biggr\}\notag\\
    & \leq\sum_{n=1}^\infty\mathbb{P}\Biggl\{\bigg\vert\frac{1}{n}\sum_{i\in I_1}X_iY_{\sigma_n(i)}\bigg\vert + \cdots + \bigg\vert\frac{1}{n}\!\sum_{i\in I_{\ell}}X_iY_{\sigma_n(i)}\bigg\vert\geq \varepsilon\Biggr\}\notag\\
    &\leq \sum_{n=1}^\infty\sum_{s=1}^{\ell}\mathbb{P}\Biggl\{\bigg\vert\frac{1}{n}\sum_{i\in I_{s}}X_iY_{\sigma_n(i)}\bigg\vert \geq \frac{\varepsilon}{\ell}\Biggr\} < \infty,
\end{align}
for any $\varepsilon > 0$, 
where %
we applied the triangular inequality and the union bound.}
By using the (generalized) Chebyshev inequality, the convergence of the series in \eqref{eq:BClemma} follows from
\vspace*{-1mm}
{\newtext \begin{align}
    \label{eq:Chebyshev}
    \frac{\ell^4}{\varepsilon^4}\sum_{n=1}^\infty\sum_{s=1}^{\ell}\mathbb{E}\hspace{0.5mm}\Big\vert\frac{1}{n}\sum_{i\in I_{s}}X_iY_{\sigma_n(i)}\Big\vert^4 < \infty.
\end{align}}
Therefore, our goal will be to show \eqref{eq:Chebyshev}. Let us expand\vspace{-1mm}
\begin{align}
    \label{equ:expanded_sum}
    &\Big|\sum_{i\in I_s}X_iY_{\sigma_n(i)}\Big|^{4} =\,\sum_{i\in I_s}X_i^4Y_{\sigma_n(i)}^4\notag\\
    &+6\,\sum_{i,j\in I_s,i<j}X_i^2Y_{\sigma_n(i)}^2X_j^2Y_{\sigma_n(j)}^2\notag\\
    &+12\!\sum_{i,j,k\in I_s,i<j<k}X_i^2Y_{\sigma_n(i)}^2X_jY_{\sigma_n(j)}X_kY_{\sigma_n(k)}\notag\\
    &+24\!\!\!\sum_{i,j,k,l\in I_s,i < j <k<l}\!\!X_i Y_{\sigma_n(i)}X_j  Y_{\sigma_n(j)}X_k Y_{\sigma_n(k)}X_l Y_{\sigma_n(l)}\notag\\
    &+8\,\sum_{i,j\in I_s,i\neq j}X_i^3Y_{\sigma_n(i)}^3X_jY_{\sigma_n(j)}.
\end{align}
Now, we will take the expectation of \eqref{equ:expanded_sum}, term by term.
{\newtext In the first two terms,
for any permutation, none of the summed expected values are zero, and each one of them can be upper bounded by a corresponding power of $\kappa_0$ using our independence and moment assumptions. Note that $a=b=0$ is also allowed which ensures that $\kappa_0 \geq 1$. Then,
\vspace{-1mm}
\begin{align}
    &\sum_{s=1}^{\ell}\sum_{i\in I_s}\mathbb{E}\left[X_i^4Y_{\sigma_n(i)}^4\right] \leq \ell\lceil n/\ell\rceil\kappa_0^2  = \mathcal{O}(n),\\
    &\sum_{s=1}^{\ell}\sum_{i,j\in I_s,i<j}\!\!\!\!\mathbb{E}\left[X_i^2Y_{\sigma_n(i)}^2X_j^2Y_{\sigma_n(j)}^2\right] \leq \ell\binom{\lceil \frac{n}{\ell}\rceil}{2}\kappa_0^4 = \mathcal{O}(n^2).\notag
\end{align}

In order to upper bound the expectation of the third term, we introduce the $\ell$-neighbourhood of index $i$ as $N(i) \defeq \{j: \abs{\hspace{0.3mm}i-j} <\ell \}$. 
If $j \notin N(i)$, then $\BE[X_iY_j] =0$ and $X_i\independent Y_j$, and consequently,
the summed expectations in the third term can be nonzero only: $a$) if the $\ell$-neighbourhoods of $j$ and $k$ each contains at least one of $\sigma_n(i),\sigma_n(j),\sigma_n(k)$ or $b$) $\sigma_n(j)$ and $\sigma_n(k)$ are in the $\ell$-neighbourhood of $i$ or $\sigma_n(i)$, or if they in the $\ell$-neighbourhood of each other. More precisely, by using the law of total probability and introducing the events $A \defeq A_i(j,k) \defeq \{\hspace{0.3mm}\sigma_n(i) \in N(j) \cap N(k)\hspace{0.3mm}\}$ and $B_j(k) \defeq B_j(i,k) \defeq \{\hspace{0.3mm}\sigma_n(j) \in N(i) \cup N(\sigma_n(i)) \cup N(\sigma_n(k))\hspace{0.3mm}\}$, 
\vspace{-1mm}
\begin{align}
    &\sum_{s=1}^{\ell}\sum_{\substack{i,j,k\in I_s\\i<j<k}}\!\!\mathbb{E}\big[X_i^2Y_{\sigma_n(i)}^2X_jY_{\sigma_n(j)}X_kY_{\sigma_n(k)}\,\big\lvert A \lor(B_j(k)\land \notag\\
    &B_k(j))\big]\mathbb{P}(A \lor (B_j(k)\land B_k(j)))\leq\ell\binom{\lceil \frac{n}{\ell}\rceil}{3}\kappa_0^4\,\cdot\\
    &\cdot\Bigg[\Bigg(\frac{\ell\,(n-1)\,(n-2)}{n(n-1)(n-2)}\Bigg)+\Bigg(\frac{4\ell\,6\ell\,(n-2)}{n(n-1)(n-2)}\Bigg)\Bigg] = \mathcal{O}(n^2),\notag
\end{align}
where the upper bound $\kappa_0^4$ on the conditional expectation follows from our moment assumptions and the repeated application of the 
Cauchy–Schwarz inequality. The argument behind the definition of event $A$ is to cover the possibilities given in the description of point $a$). Note that it can be that $N(j) \cap N(k)\neq \emptyset$ due to the construction of $I_s$, and our definition of $A$ gives the least constraints on %
$\sigma_{n}$ to ensure that the corresponding summand has a nonzero expectation. As event $A$ has the highest probability among those events that ensure this, $\mathbb{P}(A)=\mathcal{O}(1/n)$ can be used as an upper bound for the probabilities of all events described in $a$). 

The expectation of the fourth term can be upper bounded very similarly to the third one, using case separation.
For the sake of brevity, we only provide the final bound 
\vspace{-1mm}
\begin{align}
    \label{eq:eight_terms}
    &\sum_{s=1}^{\ell}\sum_{\substack{i,j,k,l\in I_s\\i < j <k<l}}\!\!\mathbb{E}\left[X_iY_{\sigma_n(i)}X_jY_{\sigma_n(j)}X_kY_{\sigma_n(k)}X_lY_{\sigma_n(l)}\right]\leq\notag\\
    &\ell\binom{\lceil \frac{n}{\ell}\rceil}{4}\kappa_0^5\Bigg[\Bigg(\frac{\ell^2\,(n-2)\,(n-3)}{n(n-1)(n-2)(n-3)}\Bigg)\notag\\
    &+\Bigg(\frac{(n-5\ell)(n-7\ell)2\ell\,4\ell}{n(n-1)(n-2)(n-3)}\Bigg)\Bigg]
    = \mathcal{O}(n^2),
\end{align}
where we can repeatedly use H\"older's inequality to upper bound the expectation terms in \eqref{eq:eight_terms} by $\kappa_0^5$.

The expectation of the fifth term of \eqref{equ:expanded_sum} can be upper bounded similarly to the second term, that is by using that at most $n^2$ expectations are summed, therefore
\vspace{-1mm}
\begin{align}
    &\sum_{s=1}^{\ell}\sum_{i,j\in I_s,i< j}\!\!\!\mathbb{E}\left[X_i^3Y_{\sigma_n(i)}^3X_jY_{\sigma_n(j)}\right]
    = \mathcal{O}(n^2).
\end{align}

\noindent Putting the five expectations together, we get}
\vspace{-1mm}
{\newtext \begin{align}
    &\sum_{s=1}^{\ell}\mathbb{E}\left(\sum_{i=1}^{n}X_iY_{\sigma_n(i)}\right)^{\!4}\!  
    =\, \mathcal{O}(n^2).
\end{align}
Consequently, we can conclude that:
\vspace{-1mm}
\begin{align}
    \frac{\ell^4}{\varepsilon^4}\sum_{n=1}^\infty \sum_{s=1}^{\ell}\mathbb{E}\Bigg\vert\frac{1}{n}\sum_{i=1}^{n}X_iY_{\sigma_n(i)}\Bigg\vert^4\! \leq\, %
    c \sum_{n=1}^\infty \frac{1}{n^2} < \infty,
\end{align}}
for some constant $c$, which completes the proof.
\end{proof}
\subsection{Strong consistency}
In the following, we state and prove our main theorem about the  strong consistency of RPS. Our proof takes several ideas from the strong consistency proof of IV-SPS \cite{volpe2015sign}. 
A major difference is that in case of sign-perturbations (such as in IV-SPS), standard SLLN type results can be applied, while for the case of RPS, we must use
Lemma \ref{lemma:permuted_sum}.
\begin{theorem}\label{thm:strong_consistency}
	Assuming \ref{assu:noise}-\ref{assu:R}, $\forall\,\varepsilon > 0,$ we have
\begin{equation}
\mathbb{P}\bigg(\bigcup_{k=1}^\infty \bigcap_{n=k}^\infty\! \big\{\,\confreg[p,n] \subseteq \mathcal{B}_{\varepsilon}(\theta^*) \big\} \bigg) =\, 1,
\end{equation}
where $\mathcal{B}_{\varepsilon}(\theta^*) \defeq \{\, \theta \in \mathbb{R}^{d}: \norm{\hspace{0.3mm}\theta - \theta^*} \le \varepsilon\, \}$.
\end{theorem}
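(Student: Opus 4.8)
The plan is to show that, almost surely, for every $\theta$ outside a small ball $\mathcal{B}_\varepsilon(\theta^*)$ and all $n$ large enough, the rank $\mathcal{R}(\theta)$ equals $m$ (or at least exceeds $m-q$), so RPS-Indicator$(\theta)$ returns $0$. Writing $\varepsilon_t(\theta) = W_t + \varphi_t\tr(\theta^*-\theta)$, the reference statistic is
\begin{align*}
S_0(\theta) = R_n^{-1/2}\Big(\tfrac1n\sum_t \psi_t W_t + V_n(\theta^*-\theta)\Big),
\end{align*}
while each perturbed statistic is
\begin{align*}
S_i(\theta) = R_n^{-1/2}\Big(\tfrac1n\sum_t \psi_t W_{\sigma_{i,n}(t)} + \tfrac1n\sum_t \psi_t \varphi_{\sigma_{i,n}(t)}\tr(\theta^*-\theta)\Big).
\end{align*}
First I would collect the limits that hold almost surely, uniformly: by Assumption \ref{assu:R}, $R_n\to R\succ0$; by Assumption \ref{assu:co-regressor_regressor}, $V_n\to V\succ0$; by a standard SLLN (using \ref{assu:co-regressor_noise}, $\BE[\psi_tW_t]=0$ and the moment bounds — here $\ell$-independence of $\psi_t W_t$ lets one invoke the same Borel–Cantelli/Chebyshev argument as in Lemma \ref{lemma:permuted_sum} with the trivial permutation) $\tfrac1n\sum_t\psi_tW_t\to 0$; and, crucially, by Lemma \ref{lemma:permuted_sum} applied coordinatewise with $X = \psi_{\cdot,k}$ and $Y = W$, we get $\tfrac1n\sum_t\psi_{t,k}W_{\sigma_{i,n}(t)}\to 0$ a.s.\ for each fixed $i\in[m-1]$ and each coordinate $k$. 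Similarly, applying Lemma \ref{lemma:permuted_sum} with $X = \psi_{\cdot,k}$, $Y = \varphi_{\cdot,l}$ (the cross-moment and $\ell$-independence hypotheses are exactly Assumption \ref{assu:co-regressor_regressor}) gives $\tfrac1n\sum_t\psi_t\varphi_{\sigma_{i,n}(t)}\tr\to 0$ a.s. Intersecting these finitely many ($m-1$ permutations, $d$ or $d^2$ coordinates) almost-sure events yields a single probability-one event on which all of them hold.

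On that event, for $\theta$ with $\|\theta-\theta^*\|\ge\varepsilon$ we have $\|S_0(\theta)\|^2 \to \|R^{-1/2}V(\theta^*-\theta)\|^2$-type behaviour; more usefully, $\|S_0(\theta)\|^2 \ge (\lambda_{\min}(R^{-1/2}V V\tr R^{-T/2})/2)\,\|\theta-\theta^*\|^2 + o(1)$ uniformly, so it is bounded below by a positive constant times $\varepsilon^2$ for all large $n$, while each $\|S_i(\theta)\|^2 = \|R_n^{-1/2}(o(1) + o(1)\cdot(\theta^*-\theta))\|^2$. The subtlety is that $\theta$ ranges over an unbounded set, so I would split into two regimes: (i) an annulus $\varepsilon \le \|\theta-\theta^*\| \le M$, where one uniformizes the $o(1)$ terms over the compact set and concludes $\|S_0(\theta)\|^2 > \max_i \|S_i(\theta)\|^2$ for $n\ge N(\varepsilon,M)$; and (ii) the far region $\|\theta-\theta^*\| > M$, where both $S_0$ and $S_i$ grow linearly in $\|\theta-\theta^*\|$ but the leading coefficient of $\|S_0\|^2$ is $\approx\|R^{-1/2}V(\theta^*-\theta)\|^2$ (bounded below since $V\succ0$) whereas that of $\|S_i\|^2$ is $\|R_n^{-1/2}(\tfrac1n\sum\psi_t\varphi_{\sigma_{i,n}(t)}\tr)(\theta^*-\theta)\|^2$ with the matrix tending to $0$; dividing through by $\|\theta-\theta^*\|^2$ reduces (ii) to a statement on the unit sphere, again compact. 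In both regimes $\mathcal{R}(\theta)=m > m-q$, hence $\theta\notin\confreg[p,n]$.

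The main obstacle I anticipate is making the "uniformity in $\theta$" rigorous while only having pointwise-in-$i$ almost-sure convergence from Lemma \ref{lemma:permuted_sum}: the resolution is that the $\theta$-dependence enters only through the fixed matrices/vectors $V_n$, $R_n$, $\tfrac1n\sum\psi_tW_{\sigma_{i,n}(t)}$, $\tfrac1n\sum\psi_t\varphi_{\sigma_{i,n}(t)}\tr$, so once these finitely many objects are controlled on a single probability-one set, the uniformity over $\theta$ (after the compactness reduction via normalization by $\|\theta-\theta^*\|$) is a deterministic continuity/eigenvalue argument. A secondary point requiring care is the tie-breaking: on the good event $\|S_0(\theta)\|^2$ is strictly larger than every $\|S_i(\theta)\|^2$ for large $n$, so $\succ_\pi$ reduces to $>$ and the random permutation $\pi$ is irrelevant. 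Finally, I would phrase the conclusion as: the probability-one event constructed above is contained in $\bigcup_{k}\bigcap_{n\ge k}\{\confreg[p,n]\subseteq\mathcal{B}_\varepsilon(\theta^*)\}$, which is the claim.
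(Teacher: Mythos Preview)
Your proposal is correct and follows essentially the same route as the paper: decompose $S_0(\theta)$ and $S_i(\theta)$ using $\varepsilon_t(\theta)=W_t+\varphi_t\tr(\theta^*-\theta)$, control $\tfrac1n\sum_t\psi_tW_t$ by a Cantelli-type SLLN, control the two permuted sums coordinatewise by Lemma~\ref{lemma:permuted_sum}, intersect the finitely many almost-sure events, and then argue uniformity in $\theta$ deterministically because the $\theta$-dependence factors through finitely many converging matrices and vectors.

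The one place your execution differs is the uniformity step. You propose a two-regime split (compact annulus plus far region handled by normalizing by $\|\theta-\theta^*\|$); the paper instead bounds directly
\[
\|S_0(\theta)\|\ \ge\ \big(\sigma_{\min}(R^{-1/2}V)-\delta\big)\,\|\tilde\theta\|-\delta,
\qquad
\|S_i(\theta)\|\ \le\ \delta\,\|\tilde\theta\|+\delta,
\]
valid for all $\theta$ once $n\ge N(\omega)$, and solves the resulting linear inequality in $\|\tilde\theta\|$ to obtain a single explicit threshold $\mu(\delta)=2\delta/(\sigma_{\min}(R^{-1/2}V)-2\delta)$, then picks $\delta$ so that $\mu(\delta)=\varepsilon$. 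This avoids the case split and the compactness detour entirely; your approach also works, but the paper's linear lower/upper bounds are simpler and give the radius explicitly.
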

\begin{proof}
In the first part of the proof we are going to prove that for any ``false'' parameter vector $\theta' \neq \theta^*$, we have
\begin{equation}\label{equ:lim_n_S0}
\norm{S_0(\theta')}^2\, \xrightarrow{a.s.}\, \norm{R^{-\frac{1}{2}}V(\theta^* - \theta')}^2 > 0,
\end{equation}
while, for $i \neq 0$, we have
\begin{equation}\label{equ:lim_n_Si}
\norm{S_i(\theta')}^2\, \xrightarrow{a.s.} 0,\,
\end{equation}
as $n \to \infty$. 
Recall the definitions of 
$V$ and 
$R$ from  \ref{assu:co-regressor_regressor}, \ref{assu:R}.
As a consequence of \eqref{equ:lim_n_S0} and \eqref{equ:lim_n_Si}, as $n$ grows, the rank $\mathcal{R}(\theta')$ of $\norm{S_0(\theta')}^2$ will be eventually 
$m$, therefore $\theta'$ will be (a.s.) excluded from the confidence region, as $n \to \infty$.
As a first step, we reformulate $S_0(\theta')$ and $S_i(\theta')$,
\begin{align}\label{equ:S0_expanded}
    S_0(\theta') 
    &= R_n^{-\frac{1}{2}}\frac{1}{n}\sum_{t=1}^n \psi_t\varepsilon_t(\theta')\\
    &=R_n^{-\frac{1}{2}}\frac{1}{n}\sum_{t=1}^n \psi_t\left(\varphi_t\tr  \theta^* + W_t - \varphi_t\tr \theta'\right) \notag\\
    &=R_n^{-\frac{1}{2}}\frac{1}{n}\sum_{t=1}^n \psi_t\varphi_t\tr \tilde{\theta} + \psi_tW_t\notag,
\end{align}
\vspace{-3mm}
\begin{align}\label{equ:Si_expanded}
    S_i(\theta') &= R_n^{-\frac{1}{2}}\frac{1}{n}\sum_{t=1}^{n}\psi_t\varepsilon_{\sigma_{i,n}(t)}(\theta')\\
    &= R_n^{-\frac{1}{2}}\frac{1}{n}\sum_{t=1}^n \psi_t\left(\varphi_{\sigma_{i,n}(t)}\tr  \theta^* + W_{\sigma_{i,n}(t)} - \varphi_{\sigma_{i,n}(t)}\tr \theta'\right) \notag\\
    &=R_n^{-\frac{1}{2}}\frac{1}{n}\sum_{t=1}^n \psi_t\varphi_{\sigma_{i,n}(t)}\tr \tilde{\theta} + \psi_tW_{\sigma_{i,n}(t)},\notag
\end{align}
where $\tilde{\theta} \defeq \theta^* - \theta'$. We will examine the four terms from \eqref{equ:S0_expanded} and \eqref{equ:Si_expanded} separately. 
{\newtext In case of the reference sum, we first assume that $\ell = 1$, then generalize our result to arbitrary $\ell$.}

{\em i) Reference sum first term}:
Using \ref{assu:co-regressor_regressor} and \ref{assu:R} it holds that\vspace{-0.5mm}
\begin{align}\label{equ:ref_sum_first}
    \lim_{n \to \infty}R_n^{-\frac{1}{2}}\frac{1}{n}\sum_{t=1}^n\psi_t\varphi_t\tr \tilde{\theta} =R^{-\frac{1}{2}}V\,\Tilde{\theta}\quad \text{(a.s.)}.
\end{align}

In the following, we will prove almost sure convergence to zero for every sum, therefore the term $R_n^{-\frac{1}{2}}$ can be omitted from the sums as $R_n^{-\frac{1}{2}} \xrightarrow{a.s.} R^{-\frac{1}{2}}$ (\ref{assu:R}).

{\em ii) Reference sum second term}:
Using Cantelli's SLLN element-wise it can be proven that
\vspace{-1mm}
\begin{align}
   \lim_{n \to \infty} \frac{1}{n}\sum_{t=1}^{n} \psi_tW_t = 0\quad \text{(a.s.)},
\end{align}
since $\{\psi_{t,j}W_t\}$ is an independent sequence, $\mathbb{E}[\psi_{t,j}W_t] = 0$ and $\mathbb{E}[\psi_{t,j}^4W_t^4] < \infty$ (\ref{assu:noise}, \ref{assu:co-regressor_noise}).

We conclude that, as $n \to \infty$, we have
\begin{align}
        \label{referenceS0}
        &\norm{S_0(\theta')}^2 \xrightarrow{a.s.}
        \big\|R^{-\frac{1}{2}}V\,\Tilde{\theta}\,\big\|^2 > 0.
\end{align}

{\newtext For an arbitrary $\ell$, the same construction can be used as in the proof of Lemma \ref{lemma:permuted_sum}, to decompose the sequence into 
subsequences $\{\psi_k \varphi_k\}_{k\in J_s}$
of independent variables, e.g.,
\vspace{-0.5mm}
\begin{align}
   &\frac{1}{n}\sum_{t=1}^n \psi_t\varphi_{t}\tr\tilde{\theta} = \frac{\lceil \frac{n}{\ell}\rceil}{n}\sum_{s=1}^{\ell}\bigg(\frac{1}{\lceil \frac{n}{\ell}\rceil}\sum_{t \in I^n_s}\psi_t\varphi_{t}\tr \tilde{\theta}
   \bigg),
\end{align}
where
$I^n_s$ and $J_s$ are defined above \eqref{eq:BClemma}, hence
$\lfloor n/\ell\rfloor \leq \abs{I^n_s} \leq \lceil n/\ell\rceil$.
As we 
decomposed 
$\{\psi_k \varphi_k\}$
into the sum of $\ell$ subseries, and each such subseries converges (a.s.) based on our previous arguments,
the original series converges (a.s.) to the sum of the corresponding limits. Thus, \eqref{referenceS0} is ensured for any $\ell$.
}

{\em iii) Permuted sum first term}: Notice that the summed elements in the first term of $S_i(\theta')$ \eqref{equ:Si_expanded} do not form an independent sequence, therefore well-known SLLN results cannot be applied to prove a.s. convergence. To prove that
\begin{align}
    \lim_{n \to \infty} \frac{1}{n}\sum_{t=1}^n \psi_t\varphi_{\sigma_{i,n}(t)}\tr\tilde{\theta} = 0\quad \text{(a.s.)},
\end{align}
Lemma \ref{lemma:permuted_sum} can be applied element-wise. Note that the conditions of the lemma follows from \ref{assu:regressor}, \ref{assu:co-regressor_noise} and \ref{assu:co-regressor_regressor} using that $\mathbb{E}|X|^p < \infty \Rightarrow \mathbb{E}|X|^q < \infty$ if $q \leq p$, and one can use the Cauchy–Schwarz inequality to show that the expectations of the cross-products are also bounded. Finally, the maximum of the obtained bounds can serve as $\kappa_0$.

{\em iv) Permuted sum second term}: The summed terms form an {\newtext $\ell$-independent} sequence in this case, however for every sample size $n$, a new sum is generated. Nevertheless, we can apply Lemma \ref{lemma:permuted_sum} element-wise again to prove that
\begin{align}\label{equ:perm_sum_sec}
    \lim_{n \to \infty} \frac{1}{n}\sum_{t=1}^n \psi_tW_{\sigma_{i,n}(t)}= 0\quad \text{(a.s.)},
\end{align}
since \ref{assu:noise} and \ref{assu:co-regressor_noise} satisfies the assumptions of Lemma \ref{lemma:permuted_sum}, for the same reasons as in the previous sum ({\em iii}).

From the previous derivations, we can conclude that
\begin{align}
        &\norm{S_i(\theta')}^2 \,\xrightarrow{a.s.}\,	0,
\end{align}
as $n \to \infty$, for each $i \in \{1,\dots,m-1\}$.

Now, we prove that the confidence region converges to $\theta^*$ uniformly, not just pointwise. Let us introduce 
\begin{align}
    &\Phi_n \defeq \begin{bmatrix}
        \varphi_1\tr \\
        \varphi_2\tr \\
        \vdots\\
        \varphi_n\tr 
    \end{bmatrix}\!,\qquad
        \Psi_n \defeq \begin{bmatrix}
        \psi_1\tr  \\
        \psi_2\tr \\
        \vdots\\
        \psi_n\tr 
    \end{bmatrix}\!,\qquad
    w_n \defeq \begin{bmatrix}
        W_1\\
        W_2\\
        \vdots\\
        W_n
    \end{bmatrix}\!,\notag\\
    &Q_{i,n} \defeq \frac{1}{n}\sum_{t=1}^n \psi_t\varphi_{\sigma_{i,n}(t)}\tr  = \frac{1}{n}\Psi_n\tr  P_{i,n}\Phi_n,
\end{align}
where $P_{i,n}$ is a permutation matrix corresponding to $\sigma_{i,n}$. Using the definition from \eqref{equ:def_Vn}, it holds that $V_n = \tfrac{1}{n} \Psi_n\tr  \Phi_n$.

Our previous results showed that for every $i$,
$\|S_i(\theta')\|^2$ converges (a.s.).
As a consequence,  for each realization $\omega \in \Omega$ (from an event with probability one, where $(\Omega, \mathcal{F}, \mathbb{P})$ is the underlying probability space),
and for each $\delta >0$, there exists a 
$N(\omega)>0$ such that for $n \geq N$ and $i \neq 0$,
\begin{align}
    \norm{R_n^{-\frac{1}{2}}V_n - R^{-\frac{1}{2}}V} \leq \delta,
    \hspace{2mm}\norm{\tfrac{1}{n}R_n^{-\frac{1}{2}}\Psi_n\tr w_n} \leq \delta,\\[2mm]
    \norm{R_n^{-\frac{1}{2}}Q_{i,n}} \leq \delta,
    \hspace{2mm}\norm{\tfrac{1}{n}R_n^{-\frac{1}{2}}\Psi_n\tr  P_{i,n} w_n} \leq \delta.
\end{align}
{\newtext Then, using similar reformulations as in the proof of \cite[Theorem 2]{volpe2015sign}, for all $n \geq N$, we have
\begin{align}
    &\norm{S_0(\theta')}\geq\sigma_{\text{min}}(R^{-\frac{1}{2}}V)\norm{\tilde{\theta}} -\delta\norm{\tilde{\theta}}-  \delta,
\end{align}}
\hspace{-1.5mm}where $U_{\sigma}\Sigma V_{\sigma}\tr$ is the SVD decomposition of $R^{-\frac{1}{2}}V$ and $\sigma_{\text{min}}(\cdot)$ denotes the smallest singular value. We also have
\begin{align}
    &\norm{S_i(\theta')} =  \norm{R_n^{-\frac{1}{2}}Q_{i,n}\tilde{\theta} + \tfrac{1}{n}R_n^{-\frac{1}{2}}\Psi_n\tr  P_{i,n} w_n} \leq  \notag\\
    &\norm{R_n^{-\frac{1}{2}}Q_{i,n}}\norm{\tilde{\theta}}+ \norm{\tfrac{1}{n}R_n^{-\frac{1}{2}}\Psi_n\tr  P_{i,n} w_n} \leq
    \delta\norm{\tilde{\theta}} + \delta.
\end{align}
Therefore, we have $\norm{S_i(\theta')} < \norm{S_0(\theta')}, \forall\, \theta'$ that satisfy
\begin{equation}
    \delta\norm{\tilde{\theta}} + \delta < \sigma_{\text{min}}(R^{-\frac{1}{2}}V)\norm{\tilde{\theta}} -\delta\norm{\tilde{\theta}}-  \delta,
\end{equation}
which can be reformulated as
\begin{equation}
    \mu(\delta)\, \defeq\, \frac{2\delta}{\sigma_{\text{min}}(R^{-\frac{1}{2}}V) - 2\delta} < \norm{\tilde{\theta}},
\end{equation}
therefore, those $\theta'$ for which $\mu(\delta) < \norm{\theta^* - \theta'}$ are not in the confidence region $\confreg[p,n]$, for $n \geq N$. For any 
$\varepsilon > 0$, by setting $\delta = (\varepsilon\sigma_{\text{min}}(R^{-\frac{1}{2}}V))/(2+2\varepsilon)$, we have $\confreg[p,n] \subseteq \mathcal{B}_{\varepsilon}(\theta^*)$, therefore,
the claim of the theorem follows.
\end{proof}
\section{Ellipsoidal Outer-Approximation}
The RPS-Indicator function can decide whether a given parameter is included in the confidence region. In order to give a compact representation of the whole region, we introduce a permutation-based version of the ellipsoidal outer-approximation (EOA) method \cite{Csaji2015, volpe2015sign}. The main motivation behind such constructions is that evaluating every parameter, even on a grid, to build the RPS region is computationally demanding, especially in higher dimensions. {\newtext 
The ellipsoids are constructed in a way that they have the same shape, orientation and center as the asymptotic confidence ellipsoids, see \eqref{equ:eoa_region}, only their sizes (determined by the radius parameters) are different. However, they have finite sample
guarantees. The radius computation, which is based on the ordering of the residual-permuted sums, and the construction of the ellipsoid can be derived similarly as for IV-SPS \cite{volpe2015sign}, since the sign-perturbations can be simply
replaced by permutations.}

First, we introduce a correlation approach based estimate \cite[Section 7.5]{Ljung1999}, that will be the center of the region, as\vspace{-1mm}
\begin{align}
    \hat{\theta}_n \defeq \left(\sum_{t=1}^n\psi_t\varphi_t\tr\right)^{\!-1}\sum_{t=1}^n\psi_tY_t
\end{align}
Then, the RPS outer-approximation can be given as
\begin{align}\label{equ:eoa_region}
    \outappr[n,p] \defeq \Big\{\,\theta \in \BR^d : \norm{R_n^{-\frac{1}{2}}V_n(\theta-\hat{\theta}_n)}^2 \leq r \,\Big\},
\end{align}
where $r$ is the $q$th largest solution of the following convex semi-definite programming problems, for
$i \in \{1, \dots, m-1\}$
\begin{equation}\label{equ:rps_eoa}
\begin{aligned}
    \min \quad & \gamma\\
    \textrm{s.t.} \quad & \lambda \geq 0 \\
                        & \begin{bmatrix}
                            -I + \lambda A_i & \lambda b_i\\
                            \lambda b_i\tr  & \lambda c_i + \gamma
                        \end{bmatrix} \succeq 0.
\end{aligned}
\end{equation}
In \eqref{equ:rps_eoa}, ``$\succeq 0$'' denotes positive semidefinitness and
\begin{align}
	A_i \defeq \;& I-R_n^{\frac{1}{2}\mathrm{T}}V_n^{-\mathrm{T}}Q_{i,n}\tr R_n^{-1}Q_{i,n}V_n^{-1}R_n^{\frac{1}{2}}\notag\\
	b_i \defeq \;& R_n^{\frac{1}{2}\mathrm{T}}V_n^{-\mathrm{T}}Q_{i,n}\tr R_n^{-1}(\xi_i-Q_{i,n}\hat{\theta}_n)\notag\\
	c_i \defeq \;&- \xi_i\tr R_n^{-1}\xi_i + 2\hat{\theta}_n\tr Q_{i,n}\tr R_n^{-1}\xi_i -\hat{\theta}_n\tr Q_{i,n}\tr R_n^{-1}Q_{i,n}\hat{\theta}_n\notag \\
    \xi_i \defeq & \frac{1}{n}\sum_{t=1}^n \psi_tY_{\sigma_{i,n}(t)}.
\end{align}
As $\outappr[n,p]$ is an outer approximation of $\confreg[n,p]$ it follows that
\begin{equation}
    \BP(\theta^* \in \outappr[n,p]) \,\geq\, 1-\frac{q}{m} \,=\, p,
\end{equation}
hence $\outappr[n,p]$ is a guaranteed confidence ellipsoid for any $n$.
\section{Simulation Experiments}
In this section, we illustrate the RPS approach through two numerical experiments. In the first experiment, we compared the RPS indicator and outer-approximation regions with the SPS indicator and outer-approximation regions, furthermore, with the confidence ellipsoid based on the classical asymptotic theory \cite{Ljung1999}. {\newtext We consider a $2$-dimensional FIR system
\begin{align}\label{equ:fir_exp}
    Y_t = b_1^*U_{t-1} + b_2^*U_{t-2} + W_t,
\end{align}
where $b_1^*=5$, $b_2^*=1$, $\{W_t\}$ are i.i.d. Laplacian random variables with mean $0$ and variance $1$, and the input is
$U_t = \sum_{i=1}^5c_iV_{t-i+1}$,
with $V_t \sim \mathcal{N}(0, 1)$ and $c_1=1$, $c_2=0.775$, $c_3=0.55$, $c_4=0.325$, $c_5=0.1$. 
From \eqref{equ:fir_exp}, the linear regression problem can be constructed as $\theta^* = [b_1^*,b_2^*]\tr$ and $\varphi_t = [U_{t-1},U_{t-2}]\tr$.
The sample size was $n=250$.
We chose $\psi_t = \varphi_t$ and $R_n = \tfrac{1}{n} \Phi_n\tr  \Phi_n$ to make sure that the RPS and SPS regions have the same shape and orientation as the asymptotic confidence ellipsoids.
} 
{\newtext The $0.9$-level confidence regions, with $m=10$ and $q=1$ for the RPS and SPS methods, are presented in Fig.\ref{fig:rps_sps_asym}.} It can be observed that the RPS regions are smaller than the SPS sets,
and that they are about the same size as the asymptotic confidence regions. 
This experiment indicates that RPS can outperform SPS 
sample complexity wise,
while having an advantage over the asymptotic region that it has finite sample coverage guarantees.
\begin{figure}[t]
  \vspace{1mm}
  \centering
  \includegraphics[scale=0.72]{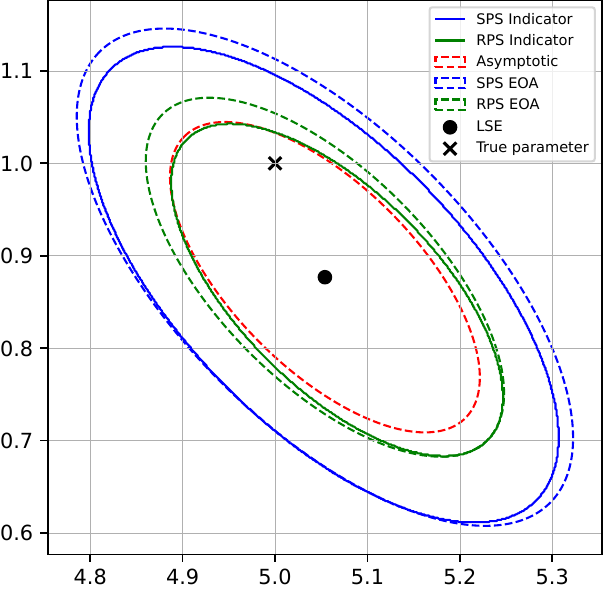}
  \caption{Comparison of 0.9-level RPS indicator, RPS EOA, SPS indicator, SPS EOA and asymptotic confidence regions for $n=250$.}
  \label{fig:rps_sps_asym}
\end{figure}

In the second experiment, we investigated the sizes of RPS regions for different sample sizes. We used the same system setting 
as in the first experiment, with the exception that $\{W_t\}$ was a sequence of i.i.d exponential random variables with parameter $0.5$, i.e.\ not a symmetric distribution about zero. Fig.\ \ref{fig:RPS_asym_diff_n} illustrates the RPS indicator and asymptotic confidence regions for $n = 200$, $n=1000$ and $n=2000$. It shows that for smaller sample sizes, RPS regions have 
smaller sizes than 
asymptotic ellipsoids, but this size difference decreases as the sample size increases. Nonetheless, RPS has exact finite sample coverage guarantees, unlike the asymptotic region.
\begin{figure}[t]
  \vspace{1mm}
  \centering
  \includegraphics[scale=0.72]{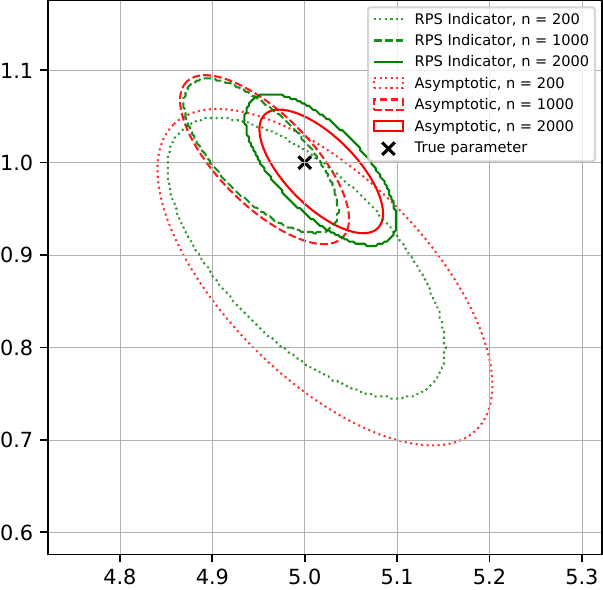}
  \caption{Comparison of 0.9-level RPS indicator and asymptotic confidence regions for $n=200$, $n=1000$ and $n=2000$.}
  \label{fig:RPS_asym_diff_n}
\end{figure}

\section{Conclusions}

In this letter, we have introduced the Residual-Permuted Sums (RPS) algorithm, motivated by a permutation-based DP method, 
as an alternative to SPS, in which the symmetricity and independence assumptions on the noises
are replaced by an i.i.d.\ condition.
RPS can construct exact, non-asymptotic, distribution-free confidence regions for the true parameters of linear regression problems.
One of the main contributions of the letter is that we proved the (uniform) strong consistency of the RPS construction under general assumptions. We also demonstrated the effectiveness of RPS empirically.

\bibliographystyle{IEEEtran}
\bibliography{sps}

\begin{thebibliography}{10}
\providecommand{\url}[1]{#1}
\csname url@samestyle\endcsname
\providecommand{\newblock}{\relax}
\providecommand{\bibinfo}[2]{#2}
\providecommand{\BIBentrySTDinterwordspacing}{\spaceskip=0pt\relax}
\providecommand{\BIBentryALTinterwordstretchfactor}{4}
\providecommand{\BIBentryALTinterwordspacing}{\spaceskip=\fontdimen2\font plus
\BIBentryALTinterwordstretchfactor\fontdimen3\font minus
  \fontdimen4\font\relax}
\providecommand{\BIBforeignlanguage}[2]{{%
\expandafter\ifx\csname l@#1\endcsname\relax
\typeout{** WARNING: IEEEtran.bst: No hyphenation pattern has been}%
\typeout{** loaded for the language `#1'. Using the pattern for}%
\typeout{** the default language instead.}%
\else
\language=\csname l@#1\endcsname
\fi
#2}}
\providecommand{\BIBdecl}{\relax}
\BIBdecl

\bibitem{Ljung1999}
L.~Ljung, \emph{System Identification: Theory for the User}, 2nd~ed.\hskip 1em
  plus 0.5em minus 0.4em\relax Prentice Hall, Upper Saddle River, 1999.

\bibitem{Ziemann2023}
I.~Ziemann, A.~Tsiamis, B.~Lee, Y.~Jedra, N.~Matni, and G.~J. Pappas, ``A
  tutorial on the non-asymptotic theory of system identification,'' in
  \emph{62nd IEEE Conference on Decision and Control (CDC)}, 2023, pp.
  8921--8939.

\bibitem{lehmann2022testing}
E.~L. Lehmann and J.~P. Romano, \emph{Testing Statistical Hypotheses},
  4th~ed.\hskip 1em plus 0.5em minus 0.4em\relax Springer, 2022.

\bibitem{Nyblom2015}
J.~Nyblom, ``Permutation tests in linear regression,'' in \emph{Modern
  Nonparametric, Robust and Multivariate Methods: Festschrift in Honour of
  Hannu Oja}, K.~Nordhausen and S.~Taskinen, Eds.\hskip 1em plus 0.5em minus
  0.4em\relax Cham: Springer International Publishing, 2015, pp. 69--90.

\bibitem{Csaji2015}
B.~{\relax Cs}. Cs\'aji, M.~C. Campi, and E.~Weyer, ``\relax{Sign-Perturbed
  Sums}: A new system identification approach for constructing exact
  non-asymptotic confidence regions in linear regression models,'' \emph{IEEE
  Transactions on Signal Processing}, vol.~63, no.~1, pp. 169--181, 2015.

\bibitem{kolumban2015perturbed}
S.~Kolumb{\'a}n, I.~Vajk, and J.~Schoukens, ``Perturbed datasets methods for
  hypothesis testing and structure of corresponding confidence sets,''
  \emph{Automatica}, vol.~51, pp. 326--331, 2015.

\bibitem{Care2016}
A.~Car{\`e}, B.~{\relax Cs}. Cs\'aji, M.~C. Campi, and E.~Weyer,
  ``{S}ign-{P}erturbed {S}ums ({SPS}) with asymmetric noise: Robustness
  analysis and robustification techniques,'' in \emph{55th IEEE Conference on
  Decision and Control (CDC), Las Vegas, Nevada}, 2016, pp. 262--267.

\bibitem{Weyer2017}
E.~Weyer, M.~C. Campi, and B.~{\relax Cs}. Cs\'aji, ``Asymptotic properties of
  {SPS} confidence regions,'' \emph{Automatica}, vol.~81, pp. 287--294, 2017.

\bibitem{sayed2003fundamentals}
A.~H. Sayed, \emph{Fundamentals of Adaptive Filtering}.\hskip 1em plus 0.5em
  minus 0.4em\relax \hspace{-1mm}Wiley, 2003.

\bibitem{volpe2015sign}
V.~Volpe, B.~{\relax Cs}. Cs\'aji, A.~Car{\`e}, E.~Weyer, and M.~C. Campi,
  ``{S}ign-{P}erturbed {S}ums ({SPS}) with instrumental variables for the
  identification of \relax{ARX} systems,'' in \emph{54th IEEE Conference on
  Decision and Control (CDC), Osaka, Japan}, 2015, pp. 2115--2120.

\bibitem{shiryaev2021probability}
A.~Shiryaev, \emph{Probability-2}.\hskip 1em plus 0.5em minus 0.4em\relax
  \hspace{-1mm}Springer, 2019.

\bibitem{shiryaev2016probability}
------, \emph{Probability-1}.\hskip 1em plus 0.5em minus 0.4em\relax
  \hspace{-1mm}Springer, 2016.

\end{thebibliography}

\end{document}